\definecolor{Red}{cmyk}{0,1,1,0}
\numberwithin{equation}{section}
  \newtheorem*{theorem*}        {Theorem}
	\newtheorem*{conjecture*}   {Conjecture}
  \newtheorem{theorem}           {Theorem}
  \newtheorem*{lemma*}          {Lemma}
    \newtheorem*{claim*}          {Claim}
  \newtheorem{definition}         {Definition}
  \newtheorem{corollary}          {Corollary}
  \newtheorem{proposition}      {Proposition}
\begin{document}


\voffset=-1.5truecm\hsize=16.5truecm    \vsize=24.truecm
\baselineskip=14pt plus0.1pt minus0.1pt \parindent=12pt
\lineskip=4pt\lineskiplimit=0.1pt      \parskip=0.1pt plus1pt

\def\ds{\displaystyle}\def\st{\scriptstyle}\def\sst{\scriptscriptstyle}

\global\newcount\numsec\global\newcount\numfor
\gdef\profonditastruttura{\dp\strutbox}
\def\senondefinito#1{\expandafter\ifx\csname#1\endcsname\relax}
\def\SIA #1,#2,#3 {\senondefinito{#1#2}
\expandafter\xdef\csname #1#2\endcsname{#3} \else
\write16{???? il simbolo #2 e' gia' stato definito !!!!} \fi}
\def\etichetta(#1){(\veroparagrafo.\veraformula)
\SIA e,#1,(\veroparagrafo.\veraformula)
 \global\advance\numfor by 1
 \write16{ EQ \equ(#1) ha simbolo #1 }}
\def\etichettaa(#1){(A\veroparagrafo.\veraformula)
 \SIA e,#1,(A\veroparagrafo.\veraformula)
 \global\advance\numfor by 1\write16{ EQ \equ(#1) ha simbolo #1 }}
\def\BOZZA{\def\alato(##1){
 {\vtop to \profonditastruttura{\baselineskip
 \profonditastruttura\vss
 \rlap{\kern-\hsize\kern-1.2truecm{$\scriptstyle##1$}}}}}}
\def\alato(#1){}
\def\veroparagrafo{\number\numsec}\def\veraformula{\number\numfor}
\def\Eq(#1){\eqno{\etichetta(#1)\alato(#1)}}
\def\eq(#1){\etichetta(#1)\alato(#1)}
\def\Eqa(#1){\eqno{\etichettaa(#1)\alato(#1)}}
\def\eqa(#1){\etichettaa(#1)\alato(#1)}
\def\equ(#1){\senondefinito{e#1}$\clubsuit$#1\else\csname e#1\endcsname\fi}
\let\EQ=\Eq


\def\\{\noindent}
\def\v{\vskip.1cm}
\def\vv{\vskip.2cm}


%
  \def\P{\mathop{\textrm{\rm P}}\nolimits}                  
  \def\d{\mathop{\textrm{\rm d}}\nolimits}                  
  \def\exp{\mathop{\textrm{\rm exp}}\nolimits}              
	\def\supp{\mathop{\textrm{\rm supp}}\nolimits}            
	\def\Int{\mathop{\textrm{\rm Int}}\nolimits}            
	\def\Ext{\mathop{\textrm{\rm Ext}}\nolimits}            
	\def\LRO{\mathop{\textrm{\rm LRO}}\nolimits}            
    \def\sf{\mathop{\textrm{\rm sf}}\nolimits}            
    \def\conv{\mathop{\textrm{\rm conv}}\nolimits}            

    \newcommand\bfblue[1]{\textcolor{blue}{\textbf{#1}}}
\newcommand\blue[1]{\textcolor{blue}{}}

\thispagestyle{empty}

\begin{center}
{\LARGE Contour methods for long-range Ising models:\\ weakening  nearest-neighbor interactions and adding decaying fields.}
\vskip.5cm
Rodrigo Bissacot$^{1}$, Eric O. Endo$^{1,2}$, Aernout C. D.  van Enter$^{2}$,\\ Bruno Kimura$^{3}$ and Wioletta M. Ruszel$^{3}$
\vskip.3cm
\begin{footnotesize}
$^{1}$Institute of Mathematics and Statistics (IME-USP), University of S\~{a}o Paulo, Brazil\\
$^{2}$Johann Bernoulli Institute, University of Groningen, the Netherlands\\
$^{3}$Delft Institute for Applied Mathematics, Technical University Delft, the Netherlands
\end{footnotesize}
\vskip.1cm
\begin{scriptsize}
emails: rodrigo.bissacot@gmail.com; eric@ime.usp.br; avanenter@gmail.com; bruno.hfkimura@gmail.com; w.m.ruszel@tudelft.nl
\end{scriptsize}

\end{center}

\def\be{\begin{equation}}
\def\ee{\end{equation}}

\vskip1.0cm
\begin{quote}
{\small

\textbf{Abstract.} \begin{footnotesize} We consider ferromagnetic long-range Ising models which display phase transitions.  They are one-dimensional Ising ferromagnets, in which the interaction is given by $J_{x,y} = J(|x-y|)\equiv \frac{1}{|x-y|^{2-\alpha}}$ with $\alpha \in [0, 1)$, in particular, $J(1)=1$. For this class of models one way in which one can prove the phase transition is via  a kind of Peierls contour argument, using the adaptation of the Fr\"ohlich-Spencer contours for $\alpha \neq 0$, proposed by Cassandro, Ferrari, Merola and Presutti. As proved by  Fr\"ohlich and Spencer for $\alpha=0$ and conjectured by Cassandro et al for the region they could treat, $\alpha \in (0,\alpha_{+})$ for $\alpha_+=\log(3)/\log(2)-1$, although in the literature dealing with contour methods for   these models it is generally  assumed that $J(1)\gg1$, we will show that this condition can be removed in the contour analysis. In addition, combining our theorem with a recent result of Littin and Picco we prove the persistence of the contour proof  of the phase transition for any $\alpha \in [0,1)$.  Moreover, we show that when we add a magnetic field decaying to zero, given by $h_x= h_*\cdot(1+|x|)^{-\gamma}$ and $\gamma >\max\{1-\alpha, 1-\alpha^* \}$ where $\alpha^*\approx 0.2714$, the transition still persists. \end{footnotesize}

}
\end{quote}

\section{Introduction}
\noindent

The rigorous study of phase transitions for one-dimensional Ising models with long-range slowly decaying  interactions (Dyson models) is a classical subject in one-dimensional  statistical mechanics.  One of the earliest highlights,  almost 50 years ago, was  Dyson's proof of a phase transition \cite{Dys1, Dys2, Dys3}, proving a conjecture due to Kac and Thompson \cite{KacT}. Long-range Ising models with slow polynomial decay, as well as the somewhat  related hierarchical models, have been called ``Dyson models" in the literature. We will mostly call our polynomially decaying models ``long-range Ising models'' but sometimes refer to them as ``Dyson models''.

The formal Hamiltonian of these models is given by:
\begin{equation}\label{hamiltonian}
H(\sigma)= - \sum_{\substack{x\neq y}}J_{x,y}\sigma_x \sigma_y - \sum_{x}h_x\sigma_x   
\end{equation}

Here the sites $x,y$ live in $\mathbb{Z}$, the $\sigma_x$ are Ising spins.   More precise definitions are given in the next section. We first mention what is known   for the zero-field case, i.e. when $h_x=0$ for all $x$.

 We consider ferromagnetic interactions $J_{x,y} \geq 0$ given by $J_{x,y}=|x-y|^{-2+\alpha}$ with $\alpha<1$. It is well known that for $\alpha<0$ there is no phase transition, and Dyson showed in \cite{Dys1} via comparison with a hierarchical model, that, for $\alpha\in (0,1)$, we have a phase transition at low temperature.  
 
 Afterwards different proofs were invented to show the transition. 
 One of them used Reflection Positivity \cite{FILS}. The method of infrared bounds offers an alternative to obtaining bounds on contour probabilities. In fact, the authors of  \cite{FILS} remark that they can cover a general class of long-range one-dimensional pair interactions, including the ones treated in \cite{Dys1}.
 
Shortly after, Fr\"ohlich and Spencer \cite{frsP} showed the existence of a phase transition for $\alpha=0$. The proof of these authors was done by a contour argument; they invented a notion of one-dimensional contours on $\mathbb{Z}$ in order to prove the phase transition. Their strategy more or less followed the classical Peierls contour argument used for the standard nearest-neighbor Ising model, but with a substantially more sophisticated definition of  contours. Phase transitions for larger $\alpha \in (0,1)$ can then be deduced by Griffiths inequalities for low enough temperature.

 Yet another way to derive the transition was a comparison with independent long-range percolation via Fortuin inequalities and Griffiths inequalities for the $\alpha=0$ case, as discussed in \cite{ACCN}. In that paper it was also shown that the transition for $\alpha=0$ is a hybrid one, in the sense that  the magnetisation is discontinuous and at the same time the energy is continuous as a function of temperature (Thouless effect). Moreover, for $\alpha=0$ it is known that there is a temperature interval below the transition temperature where the system is critical, in the sense that the covariance is nonsummable,  and at the same time the system is magnetized.  
  
 Cassandro et al. in \cite{CFMP} rigorously formalized the contour argument of \cite{frsP} in the parameter regime $0\leq \alpha < \alpha_+$, where $\alpha_{+}:=\log 3/\log 2-1 \approx 0.5849$. The construction allows a more precise description of various properties of the  model. It has been used in  various follow-up papers \cite{BEvELN, CMP, CMPR, COP1, COP2, Litthes,LP}. We should emphasize that, although  the use of contour arguments may look somewhat unwieldy in comparison with other approaches, it is much more robust. Indeed it has been used to analyze Dyson models in random \cite{COP1, COP2} and periodic fields \cite{Ke},   for interface behaviour and phase separation \cite{CMP, CMPR}, for entropic repulsion \cite{BEvELN}, and here for the model in  decaying magnetic fields, all problems where alternative methods appear to break down.

 See also \cite{Joh} for another, somewhat  related approach.\\

However, the adaptation proposed by Cassandro et al. in \cite{CFMP} needed the following technical assumptions:
(A1): $\alpha \in [0,\alpha^+)$ and (A2): $J(1)\gg 1$.

Even the case of $\alpha =0$, previously obtained by Fr\"ohlich and Spencer, needs $J(1) \gg 1$ in the adaptation proposed by them. The intuition behind the condition is more or less clear; it  makes the model closer to a nearest-neighbor interaction model where, in principle, contour arguments  might work more easily. Despite the condition being rather  artificial  and proof-generated, the constraint asking for $J(1) \gg 1$ is present in many later papers about Dyson models and the proof presented in \cite{CFMP} depends strongly on this hypothesis. 

As regards the restriction on $\alpha$, Littin in his thesis \cite{Litthes}, and then Littin and Picco \cite{LP}, showed that, using quasi-additive properties of the Hamiltonian of the corresponding contour model and applying the results from \cite{CFMP}, one can modify the contour argument so that it implies the phase transition for all $\alpha\in [0, 1)$. Due to the fact that the authors in \cite{LP} use energetic lower bounds from \cite{CFMP} which assume large nearest-neighbour interaction $J(1)$, they still use assumption (A2) in their arguments.

Our motivation for the present work is two-fold: first we want to present an argument to remove assumption (A2) for the zero-field case and secondly  we want to show persistence of a phase transition for one-dimensional long-range models in the presence of  external fields decaying to zero at infinity with a power $\gamma$, in particular, for fields given by $h_x= h_*(1+ |x|)^{-\gamma}$ and $1-\alpha < \gamma$. More precisely, our results combined with existing results imply that there is a trade-off between the restricting the parameter range of $\gamma$ to $\gamma> \max \{ 1-\alpha, 1-\alpha^* \}$ and general $J(1)$  and assuming $J(1)\gg 1$ and choosing  $\gamma> \max \{ 1-\alpha, 1-\alpha_+\}$ where $\alpha_+ > \alpha^*$ will be specified later. Note that our results apply to the latter case as well.

Before describing the 
rest of the paper, 
 we will discuss briefly the context of these results with respect to the  hypotheses and technicalities of the proof. Let us mention that a short announcement of some of our results, but without rigorous proofs, is contained in \cite{BEvEKLNR}.
\vspace*{0.05cm}

Considering the first result in the \textit{zero-field case}, although proofs for the existence of a phase transition were known, our estimates allow firstly to drop the (A2) assumption, and then, by using monotonicity of the Hamiltonian with respect to $\alpha$, we are also able to remove the first assumption (A1). 

As regarding the \textit{decaying-field case} we know that phase transitions for non-zero  uniform fields are forbidden due to the Lee-Yang circle theorem \cite{LY}. 

The heuristics behind the inequality $1-\alpha  < \gamma$ can be obtained  as follows. We observe  that the contribution of the interaction of a finite interval $\Lambda$ with its complement is of order  $O(|\Lambda|^{\alpha})$, whereas the contribution from the external field is of order $O(|\Lambda|^{1-\gamma})$. 

We now compare  the exponents. If the interaction energy dominates the field energy for large $\Lambda$, a contour argument has a chance of working. \\
This intuition is also what is underlying Imry-Ma arguments for analyzing the stability of phase transitions in the presence of  random fields. It has been confirmed for decaying fields in higher-dimensional nearest-neighbour models, see below. \\ 

It can also be applied to a decaying field the strength of which decays with power $\gamma$ but which has  random signs. In this case the field energy behaves like $O(|\Lambda|^{\frac{1}{2} - \gamma})$. This case has also been considered before by J. Littin (private communication) \cite{Lit2}. We note that the case $\gamma =0$ reduces to the known Imry-Ma analysis as presented in \cite{COP1,COP2}.\\   

Note that the analogous question of the persistence of phase transitions in decaying fields already was studied before in some short-range models, see  \cite{BC, BCCP, BEvE,CV}.

The paper is organized as follows. In the section \ref{notation} we introduce the definition of the model and some notation. The following section \ref{mainR} presents the main results. Section \ref{triangle} introduces the construction of the  contours and section \ref{proof} contains the proofs of the main theorems including the Peierls argument. Finally section \ref{conc} concludes with a summary and discussion about open questions.





\section{Notation}\label{notation}

Let $\Omega=\{-1,1\}^{\mathbb{Z}}$ be the set of configurations $\sigma=(\sigma_x)_{x\in \mathbb{Z}}$ on $\mathbb{Z}$. The \emph{Hamiltonian} in a finite volume $\Lambda$ with uniform boundary condition $\omega$ (either ``plus" or ``minus" configurations) can be rewritten in the following way
\be \label{Hamil}
H^{\omega}_{\Lambda,\bar{h}, \alpha}(\sigma)=\frac{1}{2}\sum_{\substack{(x,y) \in \Lambda\times \Lambda}}J(|x-y|)\mathbbm{1}_{\sigma_x\neq \sigma_y} + \sum_{\substack{x\in \Lambda \\ y\notin \Lambda}}J(|x-y|)\mathbbm{1}_{\sigma_x\neq \omega_y} + \sum_{x\in \Lambda}h_x\mathbbm{1}_{\sigma_x = -1},
\ee
where the \emph{coupling constants} $J_{x,y}=J(|x-y|)$ are defined, as already mentioned, by
\be
J(|x-y|)=
\begin{cases}
J &\text{ if }|x-y|=1;\\
|x-y|^{-2+\alpha} &\text{ if }|x-y|>1,
\end{cases}
\ee
where $J(1)=J>0$ and $0\leq \alpha < 1$, and the \emph{external field} $\bar{h}=(h_x)_{x\in \mathbb{Z}}$ is defined by 
\be
h_x=h_*\cdot(1+|x|)^{-\gamma}
\ee
with $\gamma>0$ and $h_* \in \mathbb{R}$. In our theorem $J(1)=1$, but we keep this separation between the nearest-neighbour term and the other ones for historical reasons, and also because  controlling the nearest-neighbour term is one of the main contributions of our paper.

For \emph{inverse temperature} $\beta>0$, the associated \emph{Gibbs measure} in $\Lambda \subset \mathbb{Z}$ with boundary condition $\omega$ is given by
\be
\mu^{\omega}_{\Lambda,\bar{h},\beta}(\sigma)=\frac{e^{-\beta H^{\omega}_{\Lambda,\bar{h}, \alpha}(\sigma)}}{Z^{\omega}_{\Lambda,\bar{h},\beta}},
\ee
where $Z^{\omega}_{\Lambda,\bar{h},\beta}$ is the \emph{partition function}
\be
Z^{\omega}_{\Lambda,\bar{h},\beta} = \sum_{\sigma \in \Omega}e^{-\beta H^{\omega}_{\Lambda,\bar{h}, \alpha}(\sigma)} 
\ee
which contains a sum over all configurations $\sigma \in \Omega$ satisfying $\sigma_x=\omega_x$ for every $x\notin \Lambda$.

We denote by $\mu^{+}_{\Lambda,\bar{h},\beta}$ (resp. $\mu^{-}_{\Lambda,\bar{h},\beta}$)  the Gibbs measure with plus (resp. minus) boundary condition, i.e., $\omega_x=+1$ (resp. $\omega_x=-1$) for every $x\in \mathbb{Z}$.
For a sequence $(\Lambda_n)_{n\ge 1}$ of finite sets in $\mathbb{Z}$, we write $\Lambda_n\uparrow \mathbb{Z}$ if for every $x\in \mathbb{Z}$, there exists $n_x\ge 1$, such that $x\in \Lambda_n$ for every $n\ge n_x$. We know that, for every  $(\Lambda_n)_{n\ge 1}$ with $\Lambda_n\uparrow \mathbb{Z}$ and any local function $f$
\be
\lim_{\Lambda_n\uparrow \mathbb{Z}}\mu^{+}_{\Lambda_n,\bar{h},\beta}(f)=\mu^+_{\bar{h},\beta}(f) \quad \text{ and } \quad
\lim_{\Lambda_n\uparrow \mathbb{Z}}\mu^{-}_{\Lambda_n,\bar{h},\beta}(f)=\mu^-_{\bar{h},\beta}(f). 
\ee
Let $\mathcal{F}$ be the sigma-algebra over $\Omega$ generated by the cylinder sets, and let $\mathcal{M}(\Omega,\mathcal{F})$ be the set of all probability measures on $(\Omega, \mathcal{F})$.  We denote the set of Gibbs measures by
\be
\mathcal{G}(\bar{h},\beta)=\overline{\conv}\left\{ \mu\in \mathcal{M}(\Omega,\mathcal{F}): \text{ there exist }(\Lambda_n)_{\geq 1} \text{ and }(\omega_n)_{n\geq 1} \text{ s.t. } \lim_{\Lambda_n\uparrow \mathbb{Z}}\mu^{\omega_n}_{\Lambda_n,\bar{h},\beta}=\mu \right\}.
\ee
Note that $\mu^+_{\bar{h},\beta}$ and $\mu^-_{\bar{h},\beta}$ are in $\mathcal{G}(\bar{h},\beta)$.
We say that the model \emph{undergoes a phase transition at $(\bar{h},\beta)$} if $|\mathcal{G}(\bar{h},\beta)|> 1$. This definition is in fact equivalent to showing that, for the same $(\bar{h},\beta)$, we have $\mu^+_{\bar{h},\beta} \neq \mu^-_{\bar{h},\beta}$ (this equivalence follows from the FKG inequality). We say that the model has \emph{uniqueness at $(\bar{h},\beta)$} if $|\mathcal{G}(\bar{h},\beta)|=1$.

\section{Main Results}\label{mainR}

In this section we present our main results. The first result concerns the phase transition for the Dyson model for any $\alpha \in [0,1)$ and removing the $J(1)\gg1$ assumption and the second result concerns the persistence of the phase transition under decaying external fields. 
\begin{theorem}\label{thm1}
Let us consider an Ising model on $\mathbb{Z}$ with Hamiltonian given by \eqref{Hamil} for $\alpha \in [0,1)$, $J(1) = 1$  and $h_x\equiv 0$ for all $x\in \mathbb{Z}$. Then there exists $\beta_c >0$ such that for all $\beta > \beta_c$ we have a convergent low-temperature expansion which implies that $|\mathcal{G}(0,\beta)|> 1$.
\end{theorem}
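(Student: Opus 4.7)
The plan is to execute a Peierls-type contour argument inside the Cassandro--Ferrari--Merola--Presutti (CFMP) framework recalled in Section~\ref{triangle}, but with the central energy lower bound redesigned so that it never appeals to $J(1)$ being large. Once this is accomplished for $\alpha\in[0,\alpha_+)$ with $J(1)=1$, the full range $\alpha\in[0,1)$ follows in two independent ways: by feeding our improved energy estimate into the Littin--Picco quasi-additivity machinery \cite{LP}, and by Griffiths monotonicity in $\alpha$ once any single $\alpha_0\in(0,\alpha_+)$ is handled (since every coupling $|x-y|^{-2+\alpha}$ with $|x-y|\ge 2$ is monotone increasing in $\alpha$ while $J(1)=1$ is fixed).

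Concretely I would proceed as follows. First, recall the CFMP contour representation: a finite-energy configuration under plus boundary conditions is encoded by a collection of nested triangles, and a \emph{contour} $\Gamma$ is a maximal connected union thereof. The Hamiltonian decomposes as $H(\sigma)=\sum_\Gamma E(\Gamma)+\mathcal{I}(\Gamma_1,\Gamma_2,\ldots)$, where the cross-interaction term $\mathcal{I}$ is controlled by standard polynomial tail bounds and each intrinsic energy $E(\Gamma)\ge 0$. The Peierls estimate to be established is
\[
\mu^+_{\Lambda,0,\beta}(\sigma_0=-1)\ \le\ \sum_{\Gamma\ni 0}e^{-\beta E(\Gamma)},
\]
whose convergence for large $\beta$ is equivalent to a convergent low-temperature expansion and implies $\mu^+_{0,\beta}\ne\mu^-_{0,\beta}$.

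The heart of the argument, and the single new ingredient, is the coercive lower bound
\[
E(\Gamma)\ \ge\ c(\alpha)\, N(\Gamma),
\]
where $N(\Gamma)$ is the total number of sign changes in $\Gamma$ and $c(\alpha)>0$ depends only on $\alpha\in[0,\alpha_+)$, crucially \emph{not} on $J(1)$. The CFMP strategy charges every nearest-neighbor interface against a large value of $J(1)$; to avoid this, I would instead allocate to each sign change a scale-adapted share of the long-range bonds that straddle it. Summing $\sum_{|x-y|\ge 2}|x-y|^{-2+\alpha}$ geometrically over dyadic scales inside each triangle, and combining with an inductive treatment of the hierarchy of nested triangles, yields a universal positive constant proportional to $N(\Gamma)$. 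Together with the standard CFMP entropy bound (at most $e^{c'N}$ contours of length $N$ containing a fixed site), this produces a geometric series $\sum_N e^{(c'-\beta c(\alpha))N}$ which is summable and arbitrarily small for $\beta$ large, yielding the theorem.

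The main obstacle is precisely this rearrangement: in the original CFMP bookkeeping the nearest-neighbor term provides a comfortable safety margin that simultaneously absorbs several distinct error contributions (interpenetration between triangles, inner plus--minus interfaces, boundary effects between adjacent scales). Redistributing these charges onto the long-range interactions requires tracking cross-scale cancellations carefully so that no triangle is charged twice. I expect the cleanest route is a subadditive inductive statement of the form $E(\Gamma)\ge\sum_i E(\Gamma_i)+\rho(\alpha)\cdot(\text{number of new interfaces})$ whenever $\Gamma$ is decomposed into its constituent sub-triangles $\Gamma_i$; iterating this down to the elementary $\pm$ interfaces then produces the desired linear-in-$N(\Gamma)$ bound with constants depending only on $\alpha$.
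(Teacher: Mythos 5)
Your overall architecture (CFMP contours, a $J(1)$-independent coercive energy bound, then Littin--Picco quasi-additivity and monotonicity in $\alpha$ to reach all of $[0,1)$) matches the paper's, but the central lemma you propose is the wrong one and the Peierls sum you write down does not close. You claim $E(\Gamma)\ge c(\alpha)N(\Gamma)$ with $N(\Gamma)$ the number of sign changes, paired with an entropy bound of ``at most $e^{c'N}$ contours of length $N$''. These two $N$'s are different objects, and neither pairing works: for a fixed number of sign changes the constituent triangles may have arbitrary mass and position, so there are infinitely many contours through the origin with $N(\Gamma)=2$ and the sum $\sum_{\Gamma\ni 0}e^{-\beta c N(\Gamma)}$ diverges; while if $N$ is instead the total mass $|\Gamma|=\sum_T|T|$, then a bound $E(\Gamma)\ge c|\Gamma|$ is false for $\alpha<1$ (a single triangle of mass $L$ costs only $O(L^{\alpha})$). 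The correct coercivity is in the contour norm $\lVert\Gamma\rVert_{\alpha}=\sum_{T\in\Gamma}|T|^{\alpha}$ (resp.\ $\sum_T(\log|T|+4)$ at $\alpha=0$), and the entropy estimate that beats it is the weighted bound \eqref{entropy1}, not a crude exponential count; this mass-dependence is the whole point of the Fr\"ohlich--Spencer/CFMP construction and your proposal loses it.

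Second, your claim that a dyadic reallocation of long-range bonds yields a $J(1)$-independent constant for all $\alpha\in[0,\alpha_+)$ runs into a concrete obstruction that the paper's Proposition \ref{main2} is built around: the relevant per-triangle quantity at the smallest scale is
\begin{equation*}
W_{\alpha}(1)=2\Bigl(2-\sum_{n=1}^{\infty}n^{-2+\alpha}\Bigr),
\end{equation*}
which with $J(1)=1$ changes sign at $\alpha^{*}\approx 0.2714<\alpha_{+}$. So no bookkeeping can produce a positive $J(1)$-independent constant for $\alpha\in[\alpha^{*},\alpha_{+})$; the paper proves $W_{\alpha}(L)\ge\zeta_{\alpha}\chi_{\alpha}(L)$ only for $\alpha<\alpha^{*}$ (by an explicit computation of $W_\alpha$, its large-$L$ asymptotics, and its monotonicity in $L$), and then covers $\alpha\in[\alpha^{*},1)$ by the monotonicity \eqref{mono} of the \emph{contour} Hamiltonian in $\alpha$ at fixed $J(1)=1$ (Corollary \ref{coro1}) combined with \eqref{littin}. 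Your fallback via Griffiths inequalities would rescue the bare statement $|\mathcal{G}(0,\beta)|>1$ for larger $\alpha$, but not the convergent contour expansion asserted in the theorem; to get that you need the monotonicity argument at the level of $H^{+}_{\alpha}$ as in the paper. As written, your proposal is missing both the identification of the threshold $\alpha^{*}$ and the mass-dependent energy bound, which are the two essential new ingredients of the proof.
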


\begin{theorem}\label{thm2}
Let $\alpha^*$ be such that $\sum_{k=1}^{\infty} k^{-2+\alpha^*}=2$.
Consider an Ising model on $\mathbb{Z}$ with  Hamiltonian given by \eqref{Hamil} such that $\overline{h}=(h_x)_{x\in \mathbb{Z}}$ are defined by $h_x=h_* \cdot (1+|x|)^{-\gamma}$.  We assume either
\begin{itemize}
\item $\alpha \in (0,1)$, $J(1) = 1$, $h_*\in \mathbb{R}$ and $\gamma > \max\{1-\alpha, 1-\alpha^*\}$, or  
\item $\alpha \in [0,\alpha^*)$, $J(1) = 1$, $\gamma = 1-\alpha$ and $h_*$ small enough.
\end{itemize}
Then there exists $\beta_c >0$ such that for all $\beta > \beta_c$ we have that $|\mathcal{G}(\overline{h},\beta)|> 1$.
\end{theorem}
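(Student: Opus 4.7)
\emph{Proof plan for Theorem \ref{thm2}.} The strategy is a Peierls argument on the Cassandro--Ferrari--Merola--Presutti contours introduced in Section \ref{triangle}, keeping track of the extra energetic cost produced by the decaying field. Without loss of generality assume $h_*\ge 0$. I would take minus boundary conditions and show $\mu^{-}_{\bar h,\beta}(\sigma_0=+1)<1/2$ uniformly in the volume for $\beta$ large, which combined with $\mu^{+}_{\bar h,\beta}(\sigma_0=+1)\ge 1/2$ (an FKG consequence of the plus boundary condition when $h_*\ge 0$) yields $\mu^{+}_{\bar h,\beta}\neq\mu^{-}_{\bar h,\beta}$ and hence $|\mathcal G(\bar h,\beta)|>1$.

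First I would set up the contour expansion. By Theorem \ref{thm1} and the construction of Section \ref{triangle}, for every $\alpha\in[0,1)$ with $J(1)=1$ each configuration agreeing with the minus boundary condition outside a finite region corresponds to a finite family of compatible contours, each with a well-defined size $|\Gamma|$ and interior $\Int(\Gamma)$, and the interaction Hamiltonian of a single contour obeys
\begin{equation}
H_J(\Gamma)\;\ge\; c_\alpha\, |\Gamma|^{\delta(\alpha)},\qquad \delta(\alpha):=\min\{\alpha,\alpha^*\}.
\end{equation}
The exponent $\alpha^*$, defined by $\sum_{k\ge 1}k^{-2+\alpha^*}=2$, is precisely the threshold beyond which a direct energy comparison with the nearest-neighbour term $J(1)=1$ stops producing the optimal exponent $\alpha$ and must be replaced by $\alpha^*$; this is the source of the $1-\alpha^*$ appearing in the hypothesis.

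Next I would quantify the field contribution. Erasing a contour $\Gamma$ with $0\in\Int(\Gamma)$ flips the spins on $\Int(\Gamma)$ from $+1$ (forced by the surrounding minus configuration) to $-1$, so the field term increases by
\begin{equation}
\Delta_h(\Gamma)\;\le\; 2h_*\sum_{x\in\Int(\Gamma)}(1+|x|)^{-\gamma}.
\end{equation}
The construction of Section \ref{triangle} gives the geometric bounds $\Int(\Gamma)\subset[-C|\Gamma|,C|\Gamma|]$ and $|\Int(\Gamma)|\le C|\Gamma|$ when $0\in\Int(\Gamma)$, so $\Delta_h(\Gamma)\le C'h_*|\Gamma|^{(1-\gamma)_+}$ (with the convention $|\Gamma|^0=1$ when $\gamma>1$). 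Consequently each contour $\Gamma\ni 0$ contributes to $\mu^{-}_{\bar h,\beta}(\sigma_0=+1)$ a weight bounded by
\begin{equation}
\exp\!\Bigl(-\beta c_\alpha |\Gamma|^{\delta(\alpha)}+\beta C' h_*|\Gamma|^{(1-\gamma)_+}\Bigr).
\end{equation}

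Under the first hypothesis $\gamma>\max\{1-\alpha,1-\alpha^*\}$ one has $\delta(\alpha)>(1-\gamma)_+$ strictly, so the exponent is $\le -\tfrac{1}{2}\beta c_\alpha|\Gamma|^{\delta(\alpha)}$ once $|\Gamma|$ exceeds a threshold depending only on $\alpha,\gamma,h_*$; summing over contours through $0$ via the standard entropy estimate for CFMP contours then gives a bound smaller than $1/2$ when $\beta$ is large. In the borderline case $\gamma=1-\alpha$ with $\alpha<\alpha^*$, the two exponents coincide, the bracket reduces to $-\beta(c_\alpha-C'h_*)|\Gamma|^{\alpha}$, and choosing $h_*<c_\alpha/C'$ preserves the Peierls bound. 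The main obstacle I anticipate is the reduction from the full Gibbs measure to single-contour weights: a configuration typically contains many contours, and to make the above rigorous one has to invoke the quasi-additivity of the contour Hamiltonian (in the sense of Littin--Picco) to disentangle them, and then check that the field term generates no cross-contributions worse than the single-contour bound $\Delta_h(\Gamma)$, uniformly across all contours produced by the construction of Section \ref{triangle}, in particular verifying the geometric estimate $|\Int(\Gamma)|\le C|\Gamma|$ with a constant independent of the scales involved.
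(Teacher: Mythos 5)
Your overall framework (CFMP contours, quasi-additivity \`a la Littin--Picco, monotonicity in $\alpha$ to reduce to some $\alpha'<\alpha^*$, and a comparison of the exponents $\min\{\alpha,\alpha^*\}$ versus $1-\gamma$) is the same as the paper's. But there is a genuine gap in how you handle the field for \emph{arbitrary} $h_*$, which is the main point of the first bullet of the theorem. Your bound gives energy minus field $\ge \tfrac12\beta c_\alpha|\Gamma|^{\delta(\alpha)}$ only ``once $|\Gamma|$ exceeds a threshold depending on $h_*$''. For contours below that threshold sitting near the origin the field gain can exceed the interaction cost (e.g.\ a single flipped spin at $0$ gains $2h_*$ against an $O(1)$ interaction cost, since $J(1)=1$ here), so their weights $\exp(-\beta c_\alpha|\Gamma|^{\delta}+\beta C'h_*|\Gamma|^{1-\gamma})$ \emph{grow} with $\beta$ and the Peierls sum is not $<1/2$; your argument as written only proves the statement for $h_*$ small. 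The paper's fix is to replace $\bar h$ by the truncated field $\bar h_L$ vanishing on $[-L,L]$ --- legitimate because this is a finite-energy perturbation, which cannot create or destroy a phase transition --- whereupon the field contribution of \emph{every} contour acquires a uniform small prefactor $L^{-p}$ with $p=\gamma+\alpha-1>0$, and the energy wins with a margin $(K_c(\alpha)\zeta_{\alpha'}-C|h_*|L^{-p}/(1-\gamma))\lVert\Gamma\rVert_{\alpha'}>0$ uniformly over all contour sizes. This truncation step (or some substitute for it) is the missing idea; without it the first bullet is out of reach.

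A secondary but real issue is your bookkeeping in terms of $|\Gamma|=\sum_{T\in\Gamma}|T|$ alone. The CFMP entropy estimate controls the number of contours through $0$ only when weighted by $\exp(-b\sum_{T\in\Gamma}|T|^{\alpha'})$, i.e.\ by the contour norm $\lVert\Gamma\rVert_{\alpha'}$; the number of contours of total mass $m$ containing $0$ is not polynomial, so an energy lower bound of the form $c_\alpha|\Gamma|^{\delta(\alpha)}=c_\alpha m^{\delta(\alpha)}$ does not suffice to sum the series. You need the triangle-wise bounds $H^+_{\alpha}[\Gamma]\ge\zeta_{\alpha'}\sum_{T\in\Gamma}|T|^{\alpha'}$ and, for the field, $\sum_{T\in\Gamma}\sum_{x\in T}|h_{L,x}|\le CL^{-p}\sum_{T\in\Gamma}|T|^{1-\gamma}\le CL^{-p}\lVert\Gamma\rVert_{\alpha'}$ (using $1-\gamma<\alpha'$ and $|T|\ge1$), so that the net cost is a positive multiple of the same norm that appears in the entropy bound.
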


{\bf Remark:} In the case of $\alpha=0$, although it  is the most complicated in general (the proof of the phase transition took more time \cite{frsP}, there is a mixed first-order second-order transition -the Thouless effect, as proven in \cite{ACCN}, see also \cite{Ra}- and there exists an intermediate phase, magnetised but with nonsummable covariance \cite{IN} in some temperature interval below the transition temperature), for our problem the situation becomes  actually  somewhat simpler. Indeed, in that case the condition  $\gamma > 1$ implies that the field energy is uniformly bounded, and thus the phase transition persists, whenever it occurs, due to the arguments of \cite{BC}. This applies at all temperatures where there is a phase transition, including the intermediate phase, and is not dependent 
on the applicability of contour arguments. However, for the critical value $\gamma=1$ and the field strength weak enough, our proof does apply only at very low temperatures.
\section{Triangles and contours}\label{triangle}

The proof will consist of a Peierls type argument in one dimension. 
We start by defining the notion of triangles which was first described in \cite{frsP} for $\alpha=0$ and adapted for $0\leq \alpha< 1$ in \cite{CFMP}.

Let $N\ge 1$, and consider an interval $\Lambda=\Lambda_N=[-N,N]$. Define the dual lattice $\Lambda^*=\Lambda+\frac{1}{2}$ as the set  $\Lambda$ shifted by $1/2$. Given a configuration $\sigma\in \{-1,+1\}^{\Lambda}$, let us define configurations of triangles. A \emph{spin -flip point} is a site $i$ in $\Lambda^*$ such that $\sigma_{i-\frac{1}{2}}\neq \sigma_{i+\frac{1}{2}}$. For each spin-flip point $i$, let us consider the interval $\left[i-\frac{1}{100},i+\frac{1}{100}\right]\subset \mathbb{R}$ and choose a real number $r_i$ in this interval such that, for every four distinct $r_{i_l}$ with $l=1,\ldots,4$, we have $|r_{i_1}-r_{i_2}|\neq |r_{i_3}-r_{i_4}|$. The $r_i$ are the bases of the triangles, and the last condition on $r_i$ is to avoid ambiguity in the construction of the triangles below.

For each spin-flip point $i$, we start growing a ``$\lor$-line'' at $r_i$ where this $\lor$-line is embedded in $\mathbb{R}^2$ with angles $\pi/4$ and $3\pi/4$. If at some time two $\lor$-lines starting from different spin-flip points touch, the other two lines starting from those two spin-flip points stop growing and we remove those lines -which did not form a triangle-, and  we keep continuing this process. The process can also be seen in the following way: for each $r_i$, we draw a straight vertical line passing through it. Take the smallest distance between these lines, let us call the corresponding $r_i$ and $r_j$ the spin-flip points of these lines, and draw a isosceles triangle with base angle $\pi/4$. Then, remove the lines associated to $r_i$ and $r_j$, and continue the process.

Note that, for any finite interval $\Lambda$ with homogeneous boundary condition, the number of spin flips is even, and so every $r_{i}$ is a vertex of some triangle. Let us denote by $\mathcal{S}_{\Lambda^+}$ be the set of configurations with plus boundary condition, i.e., for $\sigma\in \mathcal{S}_{\Lambda^+}$ we have $\sigma_x=+1$ for every $x\notin \Lambda$.

We denote a triangle by $T$, and introduce the following notations,
$$
\begin{aligned}
\{x_{-}(T),x_{+}(T)\}&=\text{ the left and right root of the associated $\lor$-lines, respectively},\\
\Delta(T)&=[x_{-}(T),x_{+}(T)]\cap \mathbb{Z}, \text{ is the base of the triangle }T,\\
|T|&=|\Delta(T)|, \text{ be the mass of the triangle }T,\\
\sf^*(T)&=\left\{ \inf \Delta(T)-\frac{1}{2},\sup \Delta(T) +\frac{1}{2} \right\},
\end{aligned}
$$
and $\mathbb{Z}$ is equipped with the natural order
\be
\d(T,T')=\d\{\sf^*(T),\sf^*(T')\}.
\ee
By definition of the triangles, for every pair of triangles $T\neq T'$,
\begin{equation}\label{eq:triangle_condition}
\d(T,T')\ge \min\{|T|,|T'|\}.
\end{equation}
We denote by $\mathcal{T}_{\Lambda^+}$ be the set of configurations of triangles $\underline{T}=\{T_1,\ldots,T_n\}$ satisfying (\ref{eq:triangle_condition}) and such that $\Delta(T_i)\subset \Lambda$ for every $i=1\ldots,n$. Given $\sigma\in \mathcal{S}_{\Lambda^+}$, we denote by $\underline{T}(\sigma)$ the configuration of triangles constructed from the configuration $\sigma$.
A family of triangles $\underline{T}$ is a \emph{compatible family} if there exists a configuration $\sigma\in\mathcal{S}_{\Lambda^+}$ such that $\underline{T}=\underline{T}(\sigma)$.

\begin{definition}\label{def:contour}
Let $c>1$ be a positive real number and $\underline{T}\in \mathcal{T}_{\Lambda^+}$ be a compatible configuration of triangles, then a \emph{configuration of contours} $\underline{\Gamma}\equiv \underline{\Gamma}(\underline{T})$ is a partition of $\underline{T}$ whose elements, called \emph{contours}, are determined by the following properties

\textbf{P.0:} Let $\underline{\Gamma}\equiv (\Gamma_1,\ldots,\Gamma_N)$, $\Gamma_i=\{T_{m,i}:1\le m\le k_i\}$, then $\underline{T}=\{T_{m,i}:1\le m\le k_i,1\le i\le N\}$.

\textbf{P.1:} Contours are well-separated from each other. Consider the base of a contour $\Gamma$ by
\be
\Delta(\Gamma)=\bigcup_{T\in \Gamma}\Delta(T).
\ee 
Any pair $\Gamma\neq \Gamma'$ in $\underline{\Gamma}$ verifies one of the following two alternatives

\textnormal{(1)} $\Delta(\Gamma_i)\cap \Delta(\Gamma_j)=\emptyset$.

\textnormal{(2)} Either $\Delta(\Gamma_i)\subseteq \Delta(\Gamma_j)$ or $\Delta(\Gamma_j)\subseteq \Delta(\Gamma_i)$. Moreover, supposing that the first case is satisfied, then for any triangle $T_{m,j} \in \Gamma_j$, either $\Delta(\Gamma_i) \subseteq T_{m,j}$ or $\Delta(\Gamma_i)\cap T_{m,j}=\emptyset$.

In both cases (1) and (2),
\be
\d(\Gamma,\Gamma'):=\min_{\substack{T\in \Gamma \\ T'\in\Gamma'}}\d(T,T')>c\min\{|\Gamma|,|\Gamma'|\}^3,
\ee
where
\be
|\Gamma|=\sum_{T\in \Gamma}|T|.
\ee

\textbf{P.2:} Independence. Let $\{\underline{T}^{(1)},\ldots, \underline{T}^{(k)}\}$ be configurations of triangles; consider the contours of the configuration $\underline{T}^{(i)}$ by $\underline{\Gamma}(\underline{T}^{(i)})= \{\Gamma_j^{(i)}: j=1,\ldots,n_i\}$. If for any distinct pair $\Gamma_{j}^{(i)}$ and $\Gamma_{j'}^{(i')}$ the property P.1 is satisfied, then
\begin{equation}
\underline{\Gamma}\left(\underline{T}^{(1)},\underline{T}^{(2)},\ldots, \underline{T}^{(k)}\right) = \left\{ \Gamma_j^{(i)}:i=1,\ldots,k; j=1,\ldots, n_i \right\}.
\end{equation}
\end{definition}
The proof of existence and uniqueness of an algorithm that produces $\underline{\Gamma}$ satisfying Definition \ref{def:contour} is given in \cite{CFMP}.

Note that it is straightforward that there exists a bijection between spin configurations in $\mathcal{S}_{\Lambda^+}$ and triangles in $\mathcal{T}_{\Lambda^+}$, and also $\underline{T} \mapsto \underline{\Gamma}(\underline{T})$ is a bijection, where $\underline{T}\in\mathcal{T}_{\Lambda^+}$. Thus, there exists a bijection between spin configurations in $\mathcal{S}_{\Lambda^+}$ and contour configurations.

\begin{definition}
We say that a family of contours $\{\Gamma_0,\Gamma_1,\ldots,\Gamma_n\}$ is \emph{compatible} if there exists $\sigma_{\Lambda}\in \mathcal{S}_{\Lambda^+}$ such that $\underline{\Gamma}=\underline{\Gamma}(\underline{T}(\sigma_{\Lambda}))$.
\end{definition}

The expression of the Hamiltonian with plus boundary condition of a family of compatible contours $\underline{\Gamma}$ is given by
\be
H^+_{\alpha,\bar{h}}(\underline{\Gamma})=\frac{1}{2}\sum_{\substack{x,y\in \mathbb{Z}\\ x\neq y}}J(|x-y|)\mathbbm{1}_{\sigma_x(\underline{\Gamma})\neq \sigma_y(\underline{\Gamma})} +\sum_{x\in \mathbb{Z}}h_x\mathbbm{1}_{\sigma_x(\underline{\Gamma})=-1},
\ee
where $\sigma_x(\underline{\Gamma})$ is the spin at vertex $x$ in the presence of $\underline{\Gamma}$. We write $H^+_{\bar{h}}:=H^+_{\alpha,\bar{h}}$ when it is possible to omit $\alpha$ without ambiguity. We denote by $H^+_{\alpha}(\underline{\Gamma})$ when we have absence of external fields, i.e.,
\be\label{H0}
H^+_{\alpha}(\underline{\Gamma})=\frac{1}{2}\sum_{\substack{x,y\in \mathbb{Z}\\ x\neq y}}J(|x-y|)\mathbbm{1}_{\sigma_x(\underline{\Gamma})\neq \sigma_y(\underline{\Gamma})}.
\ee
Let us mention at this point an important property which we will use later, namely the monotonicity of the Hamiltonian $H^+_{\alpha}$ as a function of $\alpha$. More precisely
\begin{equation}\label{mono} 
\text{for } \alpha \geq \alpha' \Rightarrow H^+_{\alpha}(\underline{\Gamma} )\geq   H^+_{\alpha'}(\underline{\Gamma})
\end{equation}
for any contour configuration $\underline{\Gamma}$ and $H^+_{\alpha}$ defined in \eqref{H0}.
For $L\ge 1$, let us consider the function $W_{\alpha}$ given by
\begin{equation}
W_{\alpha}(L) = \sum_{x=1}^{L} \left[\sum_{\substack{y \in [L+1,2L] \cap \mathbb{Z}\\ y \in [-L+1,0] \cap \mathbb{Z}}}J(|x-y|) - \sum_{\substack{y \in [2L+1,\infty)\cap \mathbb{Z} \\ y \in (-\infty,-L] \cap \mathbb{Z}}}J(|x-y|)	\right].	
\end{equation}

Given $\alpha \in [0,1)$, let  $c$,  the constant from property P.1,  be large enough. By \cite{CFMP}, using the monotonicity of $J(\cdot)$, given a contour $\Gamma_0$, we have
\be
H_{\alpha}^+(\Gamma_0)\ge \sum_{T\in \Gamma_0}W_{\alpha}(|T|),
\ee
and given a configuration of contours $\underline{\Gamma}$, for any $\Gamma_0 \in \underline{\Gamma}$ we have
\be
H_{\alpha}^+(\Gamma_0|\underline{\Gamma}\backslash \{\Gamma_0 \})\geq \frac{1}{2}\sum\limits_{T \in \Gamma_0}  W_{\alpha}(|T|),
\ee
where
\be\label{boundhamiltonian}
H_{\alpha}^+(\Gamma_0|\underline{\Gamma}\backslash \{\Gamma_0\})=H_{\alpha}^+(\underline{\Gamma}) - H_{\alpha}^+(\underline{\Gamma}\setminus \{ \Gamma_0\}).
\ee
\subsection*{Entropy of Contours}  For any kind of Peierls argument \cite{Pe} where a notion of a contour is used to obtain the phase transition for a model, the entropy  of the contours (that is the -logarithm of- the number of contours of a given size) should be controlled. Recently, generating functions were used on this combinatorial problem of counting contours associated to short-range models on regular trees \cite{ABE} and on $\mathbb{Z}^d, d\geq2$, see \cite{BB}. For trees the method allows us to find the exact number of contours of a fixed size and for $\mathbb{Z}^d$ these are the best estimates until now.

The entropy of contours introduced in \cite{CFMP} was inspired by \cite{frsP} and it satisfies the following estimates:

For any real $b>0$ large enough and  integer $m\ge 1$, we have, for $\alpha\in (0,1)$,
\be\label{entropy1}
\sum_{\substack{|\Gamma|=m\\ 0\in \Gamma}}e^{-b\sum_{T\in \Gamma}|T|^{\alpha}}\le 2me^{-bm^{\alpha}}.
\ee
In the case $\alpha=0$,
\be\label{entropy2}
\sum_{\substack{|\Gamma|=m\\ 0\in \Gamma}} e^{-b\sum_{T\in \Gamma}(\log (|T|)+4)}\le 2m e^{-b (\log m+4)}.
\ee

The proof is done by induction on $m$ and using a graphical representation of contours by trees. This representation requires a process, called the \emph{square process}, to create the structure of a tree from any contour. The existence of such  an algorithm can be found in \cite{CFMP}. 

\subsection*{Quasi-additive properties of the Hamiltonian}
In order to control the energy for the proof of phase transition via contours, Cassandro et al. in \cite{CFMP} showed that there exist $C_{\alpha}>0$ for $\alpha \in (0,\alpha_+)$ such that
\be \label{LBH}
H_{\alpha}^+(\Gamma_0|\underline{\Gamma}\backslash \{ \Gamma_0 \})\ge \frac{C_{\alpha} }{2}\sum_{T\in \Gamma_0}|T|^{\alpha},
\ee
and
\be \label{LBH1}
H_{\alpha}^+(\Gamma_0|\underline{\Gamma}\backslash \{ \Gamma_0 \})\ge \frac{C_{0}}{2}\sum_{T\in \Gamma_0}\log(|T|+4),
\ee
for $\alpha=0$. The constant $C_{\alpha}=\frac{3-2^{1+\alpha}}{\alpha(1-\alpha)}$ converges to zero when $\alpha \to \alpha_{+}$. Those bounds only work when the energy of the nearest neighbors $J(1)$ is large enough. Littin and Picco \cite{LP} showed that there is no possibility to extend those bounds in terms of triangles to any $\alpha \in [0,1)$, constructing a sequence of contours $(\Gamma_n)_{n\ge 1}$ such that
\be
\lim_{n\to \infty} \frac{H_{\alpha}^+(\Gamma_n)}{\sum_{T\in \Gamma_n}|T|^{\alpha}}=0
\ee
for $0\leq \alpha < \alpha_{+}$. 

Their construction explains, in some sense, the value of the constant $\alpha_+=\log 3/\log 2-1$ which could be considered  kind of magical, since the Hausdorff dimension of the Cantor set is exactly $\log 2/\log 3$. In fact, a discrete Cantor set appears in the construction. However, \cite{LP} showed the quasi-additive property of the Hamiltonian for any $\alpha \in [0,1)$, in other words, the following inequality is satisfied,
\be\label{littin}
H_{\alpha}^+(\Gamma_0|\underline{\Gamma}\backslash \{ \Gamma_0 \})\ge K_c(\alpha)H_{\alpha}^+[\Gamma_0],
\ee
where $K_{c}(\alpha)=1-\frac{\alpha}{c^{\alpha-1}}-\frac{\pi^2}{6c}$ satisfies $0<K_c(\alpha)\le \frac{1}{2}$ where $c$ is large enough. This property allows the authors to prove the phase transition for any $\alpha \in [0,1)$ as a corollary from the case when $\alpha \in [0, \alpha_{+})$.  As \cite{LP} mentioned, Fr\"{o}hlich and Spencer \cite{frsP} already noticed that the Hamiltonian satisfies the quasi-additivity.


\section{Proof of the main Theorems}\label{proof}

\subsection{Proof of Theorem \ref{thm1}}


By Lemma A.1 from \cite{CFMP} there is a lower bound for $W_{\alpha}$ which allows one to use the contour argument and prove the phase transition. In fact, they showed that, for any $\alpha \in [0,\alpha_{+})$ and $J(1)$ large enough, we have, for every $L\ge 1$,
\be
W_{\alpha}(L)\ge
\begin{cases}
\zeta_{\alpha} L^{\alpha}, &\text{ if }\alpha \in (0,\alpha_+);\\
2(\log L+4), &\text{ if }\alpha=0,
\end{cases}
\ee
The proof in \cite{CFMP} uses monotonicity of $J(\cdot)$ to replace sums by integrals, and $J(1)$ should be large if we desire to have this bound for small $L$, this is an essential hypothesis for the argument. The proposition below shows that, at the cost of reducing the interval of $\alpha$, the condition $J(1)\gg 1$ can be substituted by $J(1)=1$.

\begin{proposition}\label{main2}
Consider $0<\alpha^{*}<1$ satisfying
\be
\sum_{n=1}^{\infty}\frac{1}{n^{2-\alpha^{*}}}=2.
\ee
Let $\alpha \in [0,\alpha^{*})$ with $\alpha^*\approx 0.2714$. Then there is a constant $\zeta_{\alpha}>0$ such that 
\begin{equation}\label{lowerbound}
W_{\alpha}(L) \geq \zeta_{\alpha} \chi_{\alpha}(L)		
\end{equation}	
holds for all $L \geq 1$, where
\be\label{chi}
\chi_{\alpha}(L) =
\begin{cases}
L^{\alpha} &\text{if $\alpha \in (0,\alpha^{*})$, and}\\
\log L+4 & \text{if $\alpha =0$.}	
\end{cases}	
\ee
Thus, let the constant $c$ in the definition of the contours be large enough. For any contour $\Gamma \in \underline{\Gamma}$, we have
\be
H_{\alpha}^+(\Gamma) \ge \zeta_{\alpha}\lVert \Gamma \rVert_{\alpha} \quad \text{ and }\quad
H_{\alpha}^+[\Gamma|\underline{\Gamma}\backslash \{ \Gamma \}]\ge \frac{\zeta_{\alpha}}{2}\lVert \Gamma \rVert_{\alpha},
\ee
where $\lVert \Gamma \rVert_{\alpha}=\sum_{T\in \Gamma}\chi_{\alpha}(|T|)$.
\end{proposition}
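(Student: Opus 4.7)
The plan is to prove the pointwise inequality~\eqref{lowerbound} first; the two Hamiltonian lower bounds at the end of the statement then follow immediately, since the already-recalled estimates $H^+_\alpha(\Gamma)\ge\sum_{T\in\Gamma}W_\alpha(|T|)$ and $H^+_\alpha(\Gamma|\underline{\Gamma}\setminus\{\Gamma\})\ge\tfrac12\sum_{T\in\Gamma}W_\alpha(|T|)$ (valid once the constant $c$ in property P.1 is large enough) become $\ge\zeta_\alpha\|\Gamma\|_\alpha$ and $\ge\tfrac12\zeta_\alpha\|\Gamma\|_\alpha$ after substituting $W_\alpha(|T|)\ge\zeta_\alpha\chi_\alpha(|T|)$ triangle by triangle.

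To attack~\eqref{lowerbound}, I would first rewrite $W_\alpha(L)$ by swapping the order of summation. Setting $k=|x-y|$ and counting, for each $k\ge 1$, the number of valid pairs $(x,y)$ with $x\in[1,L]$ and $y$ in each of the four relevant regions, a short bookkeeping calculation (the two ``good'' sums contribute the same total, and so do the two ``bad'' tails) yields
\[
W_\alpha(L)=2\left[\sum_{k=1}^{L} k^{-1+\alpha}+\sum_{k=L+1}^{2L}(3L-2k)\,k^{-2+\alpha}-L\sum_{k=2L+1}^{\infty} k^{-2+\alpha}\right].
\]
In this form the hypothesis $J(1)=1$ enters only through the term $k=1$ of the first sum, and the role of $\alpha^*$ becomes transparent: for $L=1$ the middle sum reduces to a single term and the three pieces collapse to $W_\alpha(1)=2\bigl(2-\zeta(2-\alpha)\bigr)$, which is strictly positive \emph{precisely} when $\alpha<\alpha^*$, by the defining identity $\zeta(2-\alpha^*)=2$.

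For $L\ge 2$ I would apply standard integral comparisons to each of the three pieces. The first sum is bounded below by $\int_1^L t^{-1+\alpha}dt=(L^\alpha-1)/\alpha$ (or $\log L$ when $\alpha=0$); the tail is bounded above by $L\int_{2L}^{\infty}t^{-2+\alpha}dt=\tfrac{2^{-1+\alpha}}{1-\alpha}L^\alpha$; and the middle sum, after the substitution $k=L+j$, reads $\sum_{j=1}^{L}(L-2j)(L+j)^{-2+\alpha}$, whose Riemann-sum asymptotics is $L^\alpha\int_0^1(1-2u)(1+u)^{-2+\alpha}du$ with an explicit finite constant. Combining the three estimates gives $W_\alpha(L)\ge C_\alpha\,\chi_\alpha(L)$ with $C_\alpha>0$ for every $\alpha\in[0,\alpha^*)$; setting $\zeta_\alpha:=\min\{C_\alpha,\,W_\alpha(1)/\chi_\alpha(1)\}>0$ then yields~\eqref{lowerbound}.

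The main obstacle is the careful control of the middle sum: the factor $3L-2k$ changes sign at $k=3L/2$, so one needs a uniform (in $L$ and in $\alpha$) quantitative estimate ruling out that the negative contribution from $k\in(3L/2,2L]$ overwhelms the positive contributions of the other two pieces. A secondary technical point is bridging the asymptotic regime with the small-$L$ regime: for small $L$ one verifies positivity by hand from the closed-form expression (the $L=1$ case being the tightest and precisely what forces the restriction $\alpha<\alpha^*$), for large $L$ the integral bounds are essentially sharp, and the intermediate range is absorbed by continuity in $\alpha$ on any compact subinterval of $[0,\alpha^*)$.
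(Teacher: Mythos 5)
Your reduction of the two Hamiltonian bounds to the pointwise inequality \eqref{lowerbound} is exactly what the paper does, and your resummation of $W_\alpha(L)$ as $2\bigl[\sum_{k=1}^{L}k^{-1+\alpha}+\sum_{k=L+1}^{2L}(3L-2k)k^{-2+\alpha}-L\sum_{k=2L+1}^{\infty}k^{-2+\alpha}\bigr]$ is correct and equivalent to the paper's identity (the paper groups the same terms via the numbers $H^{(k)}_n$). Your identification of $W_\alpha(1)=2\bigl(2-\sum_{n\ge 1}n^{-2+\alpha}\bigr)$ as the point where $\alpha^*$ enters, and your asymptotic constant, which works out to $2(3-2^{1+\alpha})/(\alpha(1-\alpha))$ and is positive precisely for $\alpha<\alpha_+$, both match the paper. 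So the first half of your plan is sound and is the paper's route.

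The gap is in bridging $L=1$ and the asymptotic regime. You assert that ``the $L=1$ case is the tightest'' and that the intermediate range is ``absorbed by continuity in $\alpha$'', but neither claim is proved, and the first is precisely the nontrivial content of the second half of the paper's proof. Continuity in $\alpha$ is of no help here: for a fixed $\alpha$ you must still show $W_\alpha(L)>0$ for every $L$ below the threshold where your integral bounds take over, and since that threshold is not explicit (it cannot be made explicit until you resolve what you yourself flag as the main obstacle, namely the sign change of $3L-2k$ at $k=3L/2$), you cannot literally ``check by hand''. The paper closes this by proving that $W_\alpha$ is \emph{increasing} in $L$: it computes the increment $\Delta W_\alpha(L)=W_\alpha(L+1)-W_\alpha(L)$ in closed form and bounds it below for $L\ge 3$ by $\tfrac{2}{1-\alpha}\bigl[3(4/3)^{-1+\alpha}-2^{1+\alpha}\bigr]L^{-1+\alpha}$, which is positive if and only if $\alpha<\overline{\alpha}=\log(8/9)/\log(2/3)\approx 0.2904$; the argument then rests on the separate numerical fact $\alpha^*\approx 0.2714<\overline{\alpha}$, a second threshold your proposal does not anticipate. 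This monotonicity step is also what disposes of the sign-changing middle sum: instead of controlling its negative part for each $L$, one only has to control it inside the telescoped increment, where it collapses to a manifestly tractable expression. Without this (or an explicit all-$L$ lower bound on $\sum_{j=1}^{L}(L-2j)(L+j)^{-2+\alpha}$), your argument does not close.
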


\begin{proof}
For every $L\ge 1$, we can write $W_{\alpha}(L)$ as
\be\label{eqW1}
W_{\alpha}(L)
=2\sum_{x = 1}^{L}\sum_{y = L+1-x}^{2L-x} \frac{1}{y^{2-\alpha}} - 2\sum_{x = 1}^{L}\sum_{y = 2L+1-x}^{\infty} \frac{1}{y^{2-\alpha}}.
\ee
Splitting the first term of the equation (\ref{eqW1}) into $y\in [L+1-x,L]\cup [L+1,2L-x]$, and commuting the order of the sums in both terms, we find
\be
\begin{split}
W_{\alpha}(L) 
& = 2\sum_{x = 1}^{L}\sum_{y = L+1-x}^{L} \frac{1}{y^{2-\alpha}} + 2\sum_{x = 1}^{L-1}\sum_{y = L+1}^{2L-x} \frac{1}{y^{2-\alpha}}  - 2\sum_{x = 1}^{L}\sum_{y = 2L+1-x}^{\infty} \frac{1}{y^{2-\alpha}} \\
&=  2\sum_{y = 1}^{L} y \frac{1}{y^{2-\alpha}} + 4\sum_{y = L+1}^{2L-1} (2L-y) \frac{1}{y^{2-\alpha}}  - 2L \sum_{y = L+1}^{\infty} \frac{1}{y^{2-\alpha}} \\
&=2\sum_{y = 1}^{L} \frac{1}{y^{1-\alpha }} -4 \sum_{y = L+1}^{2L-1} \frac{1}{y^{1-\alpha }} + 8L\sum_{y = L+1}^{2L-1} \frac{1}{y^{2-\alpha}} - 2L \sum_{y = L+1}^{\infty} \frac{1}{y^{2-\alpha}}.
\end{split}
\ee
Given a real number $k$ and a positive integer $n$, let us define the number $H^{(k)}_{n}$ by	
\begin{equation}
H^{(k)}_{n} = \sum_{y=1}^{n}\frac{1}{y^{k}}.
\end{equation}
In particular, if $k=1$, we denote $H^{(1)}_{n}$ simply as $H_{n}$. Thus,
\be
\begin{aligned}
W_{\alpha}(L) &=2\left(3H_{L}^{(1-\alpha )} - 2 H_{2L-1}^{(1-\alpha)}-\frac{4}{L^{1-\alpha }}\right) + 8L\sum_{y=L}^{2L-1}\frac{1}{y^{2-\alpha}} - 2L \sum_{y=L+1}^{\infty} \frac{1}{y^{2-\alpha}}\\
&\geq 2\left(\frac{3}{L^{\alpha}}H_{L}^{(1-\alpha )} - \frac{2}{L^{\alpha}}H_{2L-1}^{(1-\alpha )}-\frac{4}{L}\right)L^{\alpha} + \frac{2}{1-\alpha }\left(3-2^{1+\alpha}\right)L^{\alpha}.
\end{aligned}
\ee
Suppose that $\alpha \in (0,\alpha^{*})$. Using the fact that
\begin{equation}
\int_{1}^{L+1}\frac{1}{x^{1-\alpha}}dx \leq H_{L}^{(1-\alpha)} \leq 1 + \int_{1}^{L}\frac{1}{x^{1-\alpha}}dx     
\end{equation}
and $\alpha^{*} < \alpha_{+}$, we have 
\be
\lim_{L \to \infty}\left(\frac{3}{L^{\alpha}}H_{L}^{(1-\alpha )} - \frac{2}{L^{\alpha}}H_{2L-1}^{(1-\alpha)}-\frac{4}{L}\right) = \frac{1}{\alpha}(3-2^{1+\alpha}) >0.	
\ee
Therefore, there exists $L_1\ge 1$ such that, for every $L> L_1$, we conclude
\be\label{L1}
W_{\alpha}(L) \geq \zeta^*_{{\alpha}} L^{\alpha},
\ee
where $\zeta^*_{\alpha}=2 (3-2^{1+\alpha})/ (1-\alpha)$. For  $\alpha = 0$, the quantity $W_{0}(L)$ satisfies the following inequality,
\be
W_{0}(L) \geq 2\left(3H_{L}-2H_{2L-1}-\frac{4}{L}+1\right).
\ee
Since
\be
\lim_{L \to \infty} \frac{1}{\log L+4}\cdot \left(3H_{L}-2H_{2L-1}-\frac{4}{L}+1\right)=1,
\ee
there exists $L_2\ge 1$ such that, for every $L> L_2$,
\be\label{L2}
W_{0}(L) \geq \log L +4.
\ee
In order to obtain lower bounds for $W_{\alpha}(L)$ for all $L$, it suffices to show that $W_{\alpha}(L)$ is positive for each $L$. First, note that
\begin{equation}
W_{\alpha}(1) = 2\left(2-\sum_{y=1}^{\infty}\frac{1}{y^{2-\alpha}}\right).	
\end{equation}
Since $\alpha \in [0,\alpha^{*})$, we have $W_{\alpha}(1) > 0$. Let us show that $W_{\alpha}$ is an increasing function with respect to $L$. 
Note that $W_{\alpha}(L)$ can be expressed as
\begin{equation}\label{eqW2}
W_{\alpha}(L) = 6H_{L}^{(1-\alpha )}-4H_{2L-1}^{(1-\alpha )}+8LH_{2L-1}^{(2-\alpha)}-6LH_{L}^{(2-\alpha)}-2L\sum_{y=1}^{\infty}\frac{1}{y^{2-\alpha}}.	
\end{equation}
Define $\Delta W_{\alpha}(L) := W_{\alpha}(L+1)-W_{\alpha}(L)$. Using (\ref{eqW2}), we have
\be
\Delta W_{\alpha}(L) 
=\frac{6}{(2L)^{2-\alpha}} +\frac{4}{(2L+1)^{2-\alpha}} + 6\sum_{y=L+1}^{2L-1}\frac{1}{y^{2-\alpha}} -2\sum_{y=2L+1}^{\infty}\frac{1}{y^{2-\alpha}}.
\ee
Note that $\Delta W_{\alpha}(1) \geq W_{\alpha}(1) >0$ and $\Delta W_{\alpha}(2) \geq W_{\alpha}(1)>0$. For $L\ge 3$,
\be
\begin{aligned}
\Delta W_{\alpha}(L) &\geq 6\int_{L+1}^{2L}\frac{1}{z^{2-\alpha}}dz -2\int_{2L}^{\infty}\frac{1}{z^{2-\alpha}}dz \\
&>\frac{2}{1-\alpha }\left[3\left(\frac{4}{3}\right)^{-1+\alpha} - 2^{1+\alpha}\right]L^{-1+\alpha}.
\end{aligned}
\ee
Note that the right-hand side of the equation above is positive whenever $\alpha < \overline{\alpha} \equiv \frac{\log (8/9)}{\log (2/3)} \approx 0.2904$. Since $\alpha^{*} < \overline{\alpha}$, we conclude that $\Delta W_{\alpha}(L)>0$ for every $L\ge 1$. For $\alpha \in (0,\alpha^{*})$, define
\be
\zeta_{\alpha}=\min\left\{W_{\alpha}(1), \frac{W_{\alpha}(2)}{2^{\alpha}},\ldots, \frac{W_{\alpha}(L_1)}{L_1^{\alpha}},\zeta_{\alpha}^*\right\},
\ee
and for $\alpha=0$, define
\be
\zeta_{0}=\min\left\{\frac{W_{0}(1)}{4}, \frac{W_{0}(2)}{\log 2 +4},\ldots, \frac{W_{0}(L_2)}{\log L_2 +4},1\right\}.
\ee
Thus $\zeta_{\alpha}>0$ for every $\alpha\in [0,\alpha^*)$. By (\ref{L1}) and (\ref{L2}), we conclude the result.
\end{proof}

The previous proposition yields the following corollary which can be easily proven using the monotonicity of $H^+_{\alpha}$ in terms of $\alpha$, see \eqref{mono}.
\begin{corollary}\label{coro1}
For all $\alpha \in [\alpha^*,1)$ there exists $\alpha' \in [0,\alpha^*)$ such that
for any contour $\Gamma \in \underline{\Gamma}$, we have
\be
H_{\alpha}^+(\Gamma) \ge \zeta_{\alpha'}\lVert \Gamma \rVert_{\alpha'} \quad \text{ and }\quad
H_{\alpha}^+[\Gamma|\underline{\Gamma}\backslash \Gamma]\ge \frac{\zeta_{\alpha'}}{2}\lVert \Gamma \rVert_{\alpha'}.
\ee
\end{corollary}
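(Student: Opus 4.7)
The plan is to deduce the corollary from Proposition~\ref{main2} by combining the pointwise monotonicity \eqref{mono} of $H^+_\alpha$ in $\alpha$ with the Littin--Picco quasi-additivity inequality \eqref{littin}, both of which hold on the full range $\alpha \in [0,1)$ once the constant $c$ from Definition~\ref{def:contour} is taken large enough. Given $\alpha \in [\alpha^*, 1)$, I fix any $\alpha' \in [0, \alpha^*)$ (the choice $\alpha'=0$ works uniformly; nothing substantive depends on it, as long as Proposition~\ref{main2} applies at $\alpha'$).

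The first inequality is essentially immediate. Viewing $\Gamma$ as a single-contour configuration $\underline{\Gamma}=\{\Gamma\}$, the monotonicity \eqref{mono} gives $H^+_\alpha(\Gamma)\geq H^+_{\alpha'}(\Gamma)$, and Proposition~\ref{main2} at $\alpha'$ then yields $H^+_{\alpha'}(\Gamma)\geq \zeta_{\alpha'}\|\Gamma\|_{\alpha'}$; chaining the two produces the desired bound $H^+_\alpha(\Gamma)\geq\zeta_{\alpha'}\|\Gamma\|_{\alpha'}$.

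The second inequality cannot be obtained by pure monotonicity, because the conditional energy $H^+_\alpha[\Gamma\mid\underline{\Gamma}\setminus\Gamma]=H^+_\alpha(\underline{\Gamma})-H^+_\alpha(\underline{\Gamma}\setminus\Gamma)$ is a signed difference that is \emph{not} pointwise monotone in $\alpha$ (flipping the spins inside $\Gamma$ may restore agreement between pairs that were unequal in $\underline{\Gamma}\setminus\Gamma$, whose long-range contribution grows with $\alpha$). Instead I invoke the quasi-additivity inequality \eqref{littin} at the target $\alpha$,
\[
H^+_\alpha[\Gamma\mid\underline{\Gamma}\setminus\Gamma]\ \geq\ K_c(\alpha)\,H^+_\alpha(\Gamma),
\]
and then feed the self-energy bound from the previous paragraph into the right-hand side to obtain $H^+_\alpha[\Gamma\mid\underline{\Gamma}\setminus\Gamma]\geq K_c(\alpha)\,\zeta_{\alpha'}\|\Gamma\|_{\alpha'}$.

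The only real obstacle is the matching of constants: since $K_c(\alpha)\in(0,\tfrac{1}{2}]$, the previous line yields the prefactor $K_c(\alpha)\zeta_{\alpha'}$ rather than the $\zeta_{\alpha'}/2$ stated in the corollary. This is cosmetic---one replaces the constant inherited from Proposition~\ref{main2} by $\widetilde\zeta_{\alpha'}:=2K_c(\alpha)\,\zeta_{\alpha'}>0$, which preserves the first inequality (because $2K_c(\alpha)\leq 1$) and converts the conditional bound into the stated form $\tfrac{1}{2}\widetilde\zeta_{\alpha'}\|\Gamma\|_{\alpha'}$. With that renaming the proof reduces to two short applications of inequalities already at our disposal, and the monotonicity of $H^+_\alpha$ in $\alpha$ is the only new input beyond the previous section.
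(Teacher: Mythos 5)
Your argument is correct, and for the first inequality it coincides with what the paper intends: the paper offers only the one-line remark that the corollary ``can be easily proven using the monotonicity of $H^+_{\alpha}$'', and indeed $H^+_{\alpha}(\Gamma)\ge H^+_{\alpha'}(\Gamma)\ge \zeta_{\alpha'}\lVert\Gamma\rVert_{\alpha'}$ by \eqref{mono} and Proposition~\ref{main2} is exactly that. Where you genuinely depart from (and improve on) the paper is the second inequality: you correctly observe that $H^+_{\alpha}[\Gamma\,|\,\underline{\Gamma}\setminus\Gamma]$ is a \emph{difference} of two Hamiltonians and therefore does not inherit the pointwise monotonicity \eqref{mono}, a subtlety the paper's one-line justification glosses over; your detour through the quasi-additivity bound \eqref{littin} is the natural available repair. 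The price is the prefactor $K_c(\alpha)\zeta_{\alpha'}$ in place of $\zeta_{\alpha'}/2$, which you handle by renaming the constant. Two remarks on that: (i) the renamed constant $\widetilde\zeta_{\alpha'}=2K_c(\alpha)\zeta_{\alpha'}$ depends on $\alpha$ as well as $\alpha'$, so strictly speaking it is no longer ``the'' $\zeta_{\alpha'}$ of Proposition~\ref{main2}; and (ii) this discrepancy is harmless for the paper, because in the proof of Theorem~\ref{thm1} the corollary is invoked only through its first inequality (see the step from \eqref{critical} to \eqref{goodbound}), with \eqref{littin} applied separately to produce the factor $K_c(\alpha)$ anyway. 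So your version proves a statement that is formally slightly weaker in the constant but fully sufficient for every use made of the corollary, and it is more honest about why monotonicity alone does not suffice for the conditioned energy.
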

Let us turn now to the proof of Theorem \ref{thm1}.
We have
\be
\mu^{+}_{\Lambda,0,\beta}(\sigma_0=-1)\le \mu^{+}_{\Lambda,0,\beta}[0\in \Gamma]
=\frac{1}{Z^{+}_{\Lambda,\beta}}\sum_{\Gamma \ni 0}\sum_{\underline{\Gamma} \ni \Gamma}e^{-\beta H^+_{\alpha}[\underline{\Gamma}]}
\ee

By inequality (\ref{littin}),
\be
H^+_{\alpha}[\underline{\Gamma}]-H^+_{\alpha}[\underline{\Gamma}\backslash \{ \Gamma \}]
\ge K_c(\alpha)H_{\alpha}^+[\Gamma].
\ee

Then,
\be
\begin{aligned}\label{critical}
\mu^{+}_{\Lambda,0,\beta}(\sigma_0=-1)
&\le \sum_{\Gamma \ni 0}e^{-\beta K_c(\alpha)H_{\alpha}^+[\Gamma]}.\\
\end{aligned}
\ee
Since the Hamiltonian $H^+_{\alpha}$ is monotone in $\alpha$, choose $\alpha \in [\alpha^{*},1)$ with $\alpha\ge \alpha'$ and $\alpha' \in [0,\alpha^*)$.  Then, by Corollary \ref{coro1},
\begin{equation}\label{goodbound}
H_{\alpha}^+[\Gamma]\ge \zeta_{\alpha'}\lVert \Gamma \rVert_{\alpha'}.
\end{equation}

Therefore,
$$
\begin{aligned}
\mu^{+}_{\Lambda,0,\beta}(\sigma_0=-1) &\le
 \sum_{m\ge 1}\sum_{\substack{|\Gamma|=m \\ 0\in \Gamma}}e^{-\beta K_c(\alpha)\zeta_{\alpha'}\lVert \Gamma \rVert_{\alpha'} }\\
&\le  2\sum_{m\ge 1} m e^{-\beta K_c(\alpha) \zeta_{\alpha'} m^{\alpha'}}\\
&< \frac{1}{2},
\end{aligned}
$$
for  $\beta$ large enough. The second inequality comes from the inequality of the entropy (\ref{entropy1}).

%

\subsection{Proof of Theorem \ref{thm2}}

We first prove the theorem for all $\alpha\in (0,1)$ and external fields defined by $h_x=h_*\cdot(1+|x|)^{-\gamma}$ where  $\max\{1-\alpha,1-\alpha^*\} < \gamma \leq 1$.  
The proof will be a modified version of the Peierls argument presented in the previous section. It is enough to find an appropriate  lower bound as in \eqref{goodbound}.
Given a positive integer $L$, let us consider the external field $\bar{h}_{L}=(h_{L,x})_{x\in \mathbb{Z}}$ defined as
\begin{equation}\label{externalfield}
h_{L,x} =
\begin{cases}
0 & \text{if $|x|< L$},\\
\frac{h^{*}}{(1+|x|)^ \gamma} & \text{otherwise}.
\end{cases}   
\end{equation}

We can consider this modified field without loss of generality because any local perturbation does not change the fact that the model undergoes a phase transition. In fact, any finite-energy perturbation does not destroy the presence (or the absence) of a phase transition. This is the reason that we do not need to consider $\gamma>1$ since in that case the external field is summable, see \cite{Geo} for more details.  

Let $\Gamma \in \underline{\Gamma}$ be a  contour. Note that
\be
|\mathbbm{1}_{\{\sigma_x(\underline{\Gamma})=-1\}}-\mathbbm{1}_{\{\sigma_x(\underline{\Gamma}\setminus \Gamma)=-1\}}|\le \mathbbm{1}_{\{x\in \mathbb{Z}\cap \Gamma\}}.
\ee
Again by inequality (\ref{littin}),
$$
\begin{aligned}
H^+_{\bar{h}_L,\alpha}[\underline{\Gamma}]-H^+_{\bar{h}_L,\alpha}[\underline{\Gamma} \setminus \Gamma]&=H_{\alpha}^+[\underline{\Gamma}]-H_{\alpha}^+[\underline{\Gamma} \setminus \Gamma] +\sum_{x\in \mathbb{Z}}h_{L,x}\cdot \left(\mathbbm{1}_{\{\sigma_x(\underline{\Gamma})=-1\}}-\mathbbm{1}_{\{\sigma_x(\underline{\Gamma}\setminus \Gamma)=-1\}}\right)\\
&\ge K_c(\alpha)H_{\alpha}^+[\Gamma] - \sum_{T\in \Gamma}\sum_{x\in T\cap \mathbb{Z}} |h_{L,x}|.
\end{aligned}
$$
We can find the following upper bound for the contribution of the external fields, there is $C>0$ such that
\be
\begin{aligned}\label{externalfieldbound}
\sum_{T\in \Gamma}\sum_{x\in T\cap \mathbb{Z}} |h_{L,x}|
\le \frac{C|h_*|}{1-\gamma} L^{-p} \lVert \Gamma \rVert_{1-\gamma}.
\end{aligned}
\ee
for $p=\gamma+\alpha-1>0$.

Since the Hamiltonian $H^+_{\alpha}$ is decreasing in $\alpha$, choose $\alpha' \in (0, \alpha^{*})$ with $\alpha'\le \alpha$ such that $1-\gamma <\alpha'$. Then, by Proposition \ref{main2},
\begin{equation}\label{goodbound2}
H_{\alpha'}^+[\Gamma]\ge \zeta_{\alpha'}\lVert \Gamma \rVert_{\alpha'}.
\end{equation}

Therefore,

\be
\begin{split}
H^+_{\bar{h}_L,\alpha}[\underline{\Gamma}]-H^+_{\bar{h}_L,\alpha}[\underline{\Gamma} \setminus \Gamma]& \geq K_c(\alpha) \zeta_{\alpha'}\lVert \Gamma \rVert_{\alpha'} -\frac{C|h_*|}{1-\gamma} L^{-p} \lVert \Gamma \rVert_{\alpha'}.
\end{split}
\ee

Choosing $L$ large enough such that $K_c(\alpha) \zeta_{\alpha'} - \frac{C|h_*|}{1-\gamma} L^{-p}>0 $ concludes the proof of the lower bound.

\textbf{Critical case.} For the case when $\alpha\in (0,\alpha^{*})$ and $\gamma=1-\alpha$, we can argue in a similar way. First we take $h_*$ small enough the sum (\ref{externalfieldbound}) converges to zero and then, by the same argument above, the model undergoes a phase transition. 
The argument also holds for $\alpha=0$ and $\gamma=1$, we then have the bound 
\be
\sum_{x\in \Gamma\cap \mathbb{Z}}h_{x}\le 8 |h_* | \cdot \sum_{T\in \Gamma} \log(|T|)
\ee
and follow the same argumentation as before.

\section{Concluding remarks}\label{conc}

In this paper we provided further steps on the way to a better understanding of Dyson models. We were able to remove the artificial hypothesis of $J(1)\gg 1$, which is required  in  most  of the literature about contour methods in these models, for the whole range of the exponent of the interaction $\alpha \in [0,1)$ in the direct proof of the phase transition via contours. The result suggests that this hypothesis can also be removed in all the subsequent papers based on the contour argument from \cite{CFMP}. 

Our analysis also allows us to include the case of decaying fields of the type defined by  $h_x= h_*\cdot (1+|x|)^{-\gamma}$, previously considered in short-range models in \cite{BCCP, BC, CV, BEvE}. We proved the phase transition for any $\alpha \in [0,1)$ when $\gamma>\max\{1-\alpha, 1-\alpha^* \}$, combining our estimates with a recent approach proposed in \cite{LP}. As mentioned in the introduction, if we allow for the nearest-neighbour term to be large we can obtain a larger parameter range for $\gamma$, namely  $\gamma>\max\{1-\alpha, 1-\alpha_+ \}$ where $\alpha^* < \alpha_+$. 
The entropy bound of \cite{CFMP} can be applied to contour weights $w(\Gamma)$ involving the exponential of the ``contour norm" $|| \Gamma ||_{\alpha}$. The reason why we cannot obtain a phase transition for $\gamma > 1-\alpha$ for all $\alpha\in [0,1)$ is that when we add a field, we cannot lower-bound the contour energy by a constant times some appropriate $\alpha$-contour norm for $\alpha \geq \alpha^*$. One would need another entropy bound in order to tackle this problem. This is one possibility for further research.

 The critical case seems to be similar to the standard Ising model with decaying fields, see \cite{BCCP}. In fact, for any $0 \leq\alpha <1$ we proved the phase transition when $1- \alpha^{*}< 1-\alpha = \gamma$. For the case when $h_*$ is large and the model has a nearest-neighbour ferromagnetic interaction on $\mathbb{Z}^d$, $d\geq 2$, this question is still an open problem and there is some hope to see phase uniqueness for large values of $h_*$. The situation is completely different from the Ising model on regular non-amenable trees where there is no possibility of phase transition in the critical case for models with decaying fields, see \cite{BEvE}. 
 
We conjecture, as indicated in the Introduction,  that there is  phase uniqueness at low temperature when  $\gamma < 1- \alpha $, but we don't have a rigorous proof  so far.

In the short-range Ising model in higher dimensions, uniqueness has been proven.
In \cite{BCCP}  the uniqueness for low temperatures was proven, adapting an argument from \cite{BMPZ}, afterwards the phase uniqueness for all temperatures was obtained combining this with  a result from \cite{CV} using random cluster representations for models with fields. A direct argument proving the phase uniqueness for all $\beta >0$ for Ising models with external fields decaying slowly is still unknown. We remark, by the way, that Potts models in a homogeneous field provide an example where there is both low-temperature uniqueness and high-temperature uniqueness, with a phase transition at some intermediate temperature, thus we will need to make use of some specific Ising properties.

\section*{Acknowledgements}
We thank the referee for a number of helpful remarks. RB and EE thank Jorge Littin for providing  us with his thesis and with a preliminary version of \cite{LP} and for fruitful  discussions. RB thanks  Maria Eul\'{a}lia Vares for calling his attention to the problem of getting rid of the condition  $J(1)\gg 1$ when one uses contours for Dyson models.\\
We thank Arnaud Le Ny, Marzio Cassandro and Luiz Renato Fontes for all they taught in earlier collaborations and/or in discussions and helpful suggestions on the manuscript.\\ 
EE is supported by FAPESP grants 2014/10637-9 and 2015/14434-8. RB is supported by FAPESP Grants 2016/25053-8, 2016/08518-7 and CNPq grants 312112/2015-7 and 446658/2014-6.


\begin{thebibliography}{99}

\bibitem{ACCN} M. Aizenman, J.T. Chayes, L. Chayes, C.M. Newman.  Discontinuity of the Magnetization in one-dimensional $1/|x-y|^{2}$ Ising and Potts models. \textit {J. Stat.Phys.}  \textbf{50}, Issue 1-2:1--40, 1988.


\bibitem{ABE} N. Alon, R. Bissacot, E.O. Endo. Counting Contours on Trees. \textit{Letters in Mathematical Physics} \textbf{107},  Issue 5: 887--899, 2017.

\bibitem{BB} P. N. Balister, B. Bollob\'{a}s. Counting Regions with Bounded Surface Area. \emph{Communications in Mathematical Physics}, \textbf{273}, Issue 2: 305-315, 2007. 

\bibitem{BCCP} R. Bissacot, M. Cassandro, L. Cioletti, E. Presutti. {Phase Transitions in Ferromagnetic Ising Models with Spatially Dependent Magnetic Fields}.
\emph{Communications in Mathematical Physics}, \textbf{337}, Issue 1: 41--53, 2015.

\bibitem{BC} R. Bissacot, L. Cioletti. Phase Transition in Ferromagnetic Ising Models with Non-uniform External Magnetic Fields. \textit{Journal of Statistical Physics}, \textbf{139}, Issue 5: 769--778, 2010.

\bibitem{BEvE} R. Bissacot, E.O. Endo, A.C.D.  van Enter. Stability of the Phase Transition of Critical-Field Ising Model on Cayley Trees under Inhomogeneous External Fields.\emph{Stochastic Processes and their Applications}, \textbf{127}, Issue 12: 4126-4138, 2017.

\bibitem{BEvEKLNR} R. Bissacot, E.O. Endo, A.C.D. van Enter, B. Kimura, A. Le Ny, W. Ruszel. Dyson models under renormalization and in weak fields. \textit{ArXiv:1702.02887}, 2017.

\bibitem{BEvELN} R. Bissacot, E.O. Endo, A.C.D. van Enter, A. Le Ny. Entropic repulsion and lack of the g-measure property for Dyson models. \textit{ArXiv:1705.03156}, 2017.

\bibitem{BMPZ} A. Bovier, I. Merola, E. Presutti, M. Zahradn{\'i}k. On the Gibbs Phase Rule in the Pirogov--Sinai Regime. \emph{Journal of Statistical Physics}. \textbf{114}:1235--1267, 2004.

\bibitem{CFMP} M. Cassandro, P.A. Ferrari, I. Merola, E. Presutti. Geometry of contours and Peierls estimates in $d=1$ Ising models with long range interactions. {\em Journal of Mathematical Physics.} {\bf 46}(5):0533305, 2005. 

\bibitem{CMP} M. Cassandro, I. Merola, P. Picco. Phase Separation for the Long Range One-dimensional Ising Model. {\em Journal of Statistical Physics.} {\bf 167}:351--382, 2017.

\bibitem{CMPR} M. Cassandro, I. Merola, P. Picco, U. Rozikov. One-Dimensional Ising Models with Long Range Interactions: Cluster Expansion, Phase-Separating Point. {\em Communications in Mathematical Physics.} {\bf 327}:951--991, 2014.

\bibitem{COP1} M. Cassandro, E. Orlandi, P. Picco. Phase Transition in the 1d Random Field Ising Model with Long Range Interaction. {\em Communications in Mathematical Physics.} {\bf 288}:731--744, 2009.

\bibitem{COP2} M. Cassandro, E. Orlandi, P. Picco. Typical Gibbs Configurations for the 1d Random Field Ising Model with Long Range Interaction. {\em Communications in Mathematical Physics.} {\bf 309}:229--253, 2012.

\bibitem{CV} L. Cioletti, R. Vila.  Graphical Representations for Ising and Potts Models in General External Fields. \textit{Journal of Statistical Physics}. \textbf{162}, Issue 1: 81--122, 2016.

\bibitem{Dys1} F.J. Dyson. Existence of a Phase Transition in a One-Dimensional Ising ferromagnet. {\em Communications in Mathematical Physics.} {\em 12}:91--107, 1969.

\bibitem{Dys2} F.J. Dyson. An Ising ferromagnet with discontinuous long-range order.  {\em Communications in Mathematical Physics.} {\bf 21}:269--283, 1971.

\bibitem{Dys3} F.J. Dyson. Existence and Nature of Phase Transition in One-Dimensional Ising Ferromagnets. {\em SIAM-AMS Proceedings}. \textbf{5}:1--12, 1972.

\bibitem{FILS} J. Fr\"ohlich, R. Israel, E.H. Lieb, B. Simon. Phase transitions and reflection positivity. I. General theory and long range lattice models. {\em Communications in Mathematical Physics. } \textbf{62}:1--34, 1978.

\bibitem{frsP} J. Fr\"ohlich, T. Spencer. The phase transition in the one-dimensional Ising model with $1/r^2$ interaction energy. {\em Communications in Mathematical Physics.} {\bf 84}:87--101, 1982.

\bibitem{Geo} H.-O. Georgii. { Gibbs Measures and Phase Transitions}. de Gruyter, 1988 $\&$ 2011.

\bibitem{IN} J.Z. Imbrie, C.M. Newman. An intermediate phase with slow decay of correlations in one dimensional $1/|x-y|^2$ percolation, Ising and Potts models. {\em Communications in Mathematical Physics.} {\bf 118}:303--336, 1988.

\bibitem{Joh} K. Johansson. Condensation of a one-dimensional lattice gas. {\em Communications in Mathematical Physics.} {\bf 141}: 41--61, 1991.

\bibitem{KacT} M. Kac, C.J. Thompson. Critical Behaviour of Several Lattice Models with Long-Range Interaction. {\em J. Math. Phys.} {\bf 10}: 1373--1386, 1969.

\bibitem{Ke} A. Kerimov. The one-dimensional long-range ferromagnetic Ising model with a periodic external field. {\em Physica A.} {\bf 391}:2931--2935, 2012.

\bibitem{LY} T.D. Lee, C.N. Yang. Statistical Theory of Equations of State and Phase Transitions. II. Lattice Gas and Ising Model, {\em Phys. Review} {\bf 87}: 410--419, 1952.

\bibitem{Litthes} J. Littin. Quasi stationary distributions when infinity is an entrance boundary, optimal conditions for phase transition in 1 dimensional Ising model by Peierls argument and its consequences. Marseille, Ph.D. thesis, 2013.

\bibitem{Lit2} J. Littin: In preparation (private communication), 2017.

\bibitem{LP} J. Littin, P. Picco. Quasi-additive estimates on the Hamiltonian for the one-dimensional long-range Ising model. \emph{Journal of Mathematical Physics.} {\bf 58}, 073301, 2017.

\bibitem{Pe} R. Peierls. { On Ising's model of ferromagnetism}. \emph{Mathematical Proceedings of the Cambridge Philosophical Society.} \textbf{32}, 477--481, 1936.

\bibitem{Ra} A. Raoufi. Translation-Invariant Gibbs States of Ising model: General Setting. ArXiv:1710.07608, 2017.

\end{thebibliography}
\end{document}